\pgfplotsset{compat=1.16}
\newtheorem{lemma}{Lemma}
\def\BibTeX{{\rm B\kern-.05em{\sc i\kern-.025em b}\kern-.08em
    T\kern-.1667em\lower.7ex\hbox{E}\kern-.125emX}}
\renewcommand{\vec}[1]{\mathbf{#1}}
\newcommand{\transpose}[1]{#1^{\mathsf{T}}}
\newcommand{\cc}[2]{{#1}^{cc}_{#2}}
\newcommand{\cv}[2]{{#1}^{cv}_{#2}}
\newcommand*{\pcc}{\cc{p}{l}}
\newcommand{\tcc}{\cc{t}{j,l}}
\newcommand{\Acc}{\cc{a}{l}}
\newcommand{\Bcc}{\cc{b}{l}}
\newcommand{\Acv}{\cv{a}{l}}
\newcommand{\Bcv}{\cv{b}{l}}
\newcommand{\barAcc}{\cc{\bar{a}}{l}}
\newcommand{\barBcc}{\cc{\bar{b}}{l}}
\newcommand{\barAcv}{\cv{\bar{a}}{l}}
\newcommand{\barBcv}{\cv{\bar{b}}{l}}
\newcommand{\expa}{\alpha_{j,l}}
\newcommand{\xmark}{-}
\newcommand{\cmark}{\checkmark}
\newcommand{\REVONE}[1]{#1}
\begin{document}

\title{Scheduling Battery-Electric Bus Charging under Stochasticity using a Receding-Horizon Approach 
    \thanks{
        This work was supported in part by the National Science Foundation through the Advancing Sustainability through Powered Infrastructure for Roadway Electrification (ASPIRE) Engineering Research Center under Grant EEC-1941524, in part by the Department of Energy through a Prime Award with ABB Ltd., under Grant DE-EE0009194, and in part by PacifiCorp under Contract 3590.

        The authors are with the Department of Electrical and Computer Engineering, Utah State University (USU), Logan, UT 84322, USA\@ (email: \{justin.whitaker derek.redmond greg.droge jake.gunther\}@usu.edu)
    }
}

\author{Justin~Whitaker,~\IEEEmembership{Graduate~Student~Member,~IEEE,} \and Derek~Redmond,~\IEEEmembership{Graduate~Student~Member,~IEEE,} \and Greg~Droge~\IEEEmembership{Member,~IEEE,} \and Jacob~Gunther~\IEEEmembership{Senior~Member,~IEEE}
}

\maketitle

\begin{abstract}
    A significant challenge of adopting battery electric buses into fleets lies in scheduling the charging, which in turn is complicated by considerations such as timing constraints imposed by routes, long charging times, limited numbers of chargers, and utility cost structures.
    This work builds on previous network-flow-based charge scheduling approaches and includes both consumption and demand time-of-use costs while accounting for uncontrolled loads on the same meter.
    Additionally, a variable-rate, non-linear partial charging model compatible with the mixed-integer linear program (MILP) is developed for increased charging fidelity.
    To respond to feedback in an uncertain environment, the resulting MILP is adapted to a hierarchical receding horizon planner that utilizes a static plan for the day as a reference to follow while reacting to stochasticity on a regular basis.
    This receding horizon planner is analyzed with Monte-Carlo techniques alongside two other possible planning methods.
    It is found to provide up to 52\% cost savings compared to a non-time-of-use aware method and significant robustness benefits compared to an optimal open-loop method.
\end{abstract}

\begin{IEEEkeywords}
    Integer linear program, optimization, optimal scheduling, green transportation, batteries, power demand, receding horizon
\end{IEEEkeywords}

\section{Introduction}
\IEEEPARstart{W}{ith} the growing interest in electric vehicles and the increasing push for greener transportation, many organizations and companies are looking to electrify significant portions of their fleet vehicles~\cite{Khan2022}.
In particular, transportation agencies, like the Utah Transit Authority (UTA), are actively exploring and implementing the use of electric buses as replacements for traditional internal combustion engine (ICE) buses.
A battery electric bus (BEB), however, has several difficulties compared to an ICE bus.
The energy storage capacity of a BEB is usually less than an ICE bus, with significantly longer refueling time.
Utility cost structures add complication with multiple components to the cost beyond the cost of energy usage or \textit{consumption}.
Charging several BEBs at once incurs \textit{demand} costs, based on the maximum power draw, which can add significant expense~\cite{Qin2016}.
Furthermore, it is common to have \textit{time-of-use} (TOU), or time-dependent, costs.
Accordingly, scheduling the charging of BEBs can make a significant impact in the day-to-day costs of using BEBs in place of ICE buses.

The stochasticity introduced into the system from real-world operations (traffic delays, discharge variations due to temperature, etc.) can also have significant adverse effect on the cost-aware scheduling of BEB charging.
The complex cost structure, lower energy capacity, and slow refueling times means that BEBs often need to operate near the edge of their capabilities, which in turn amplifies the effects of the stochasticity.
For example, a BEB that can normally operate with charging only at low-cost times, but that is experiencing higher than expected discharging due to temperature, traffic, etc.\ may require charging at inopportune times, such as during high cost TOU periods.
If not properly addressed, this can even result in BEBs being at risk of violating minimum charge level constraints and being unable to continue operation.
This work seeks to provide a method to mitigate the effects of stochasticity in scheduling the charging of BEB fleets while considering the complex utility pricing schedule for the fleet.

Various aspects of the problem have been addressed in the literature; of particular relevance to the proposed work are \textit{utility costs}, \textit{partial charging}, and \textit{stochastice awareness}.
A number of representative works are summarized with respect to these categories in \cref{tab:literature}.
As seen in \cref{tab:literature}, nearly all works consider energy consumption costs.
Fewer consider more complicated additions to the cost structure with~\cite{Zhou2020-Mixed,Rinaldi2020} considering TOU consumption costs while~\cite{He2020,Bagherinezhad2020,Mortensen2023} additionally consider TOU demand costs.
Despite rare consideration, these additional costs can account for a significant percentage of the overall charging costs~\cite{Qin2016,Mortensen2023}

Additionally, several methods of modelling charging behavior have been explored, with the most common methods being to either assume full-charge, utilize a linear partial charge model, or to form a non-linear partial charging model.
As \cref{tab:literature} shows, many works assume a BEB always charges to full capacity when scheduled to charge, e.g.,~\cite{Zhou2020-Mixed,Rinaldi2020,Duan2021,Tang2019}.
This simplifies the problem, but also reduces fidelity and flexibility.
In contrast, a partial charging model was key to~\cite{He2020,Bagherinezhad2020,Mortensen2023} considering demand costs and gaining significant cost improvements.
Works with partial charging use either linear~\cite{Bagherinezhad2020,Chen2016,Mortensen2023} or non-linear~\cite{Zhang2021,He2020,Whitaker23} charging profile models.
The approaches that utilize a non-linear charging profile achieve a higher fidelity model of charging behavior, typically at the cost of increased computation requirements (e.g.,~\cite{Zhang2021}).
One exception is~\cite{He2020}, where a linear program is formed with piecewise-linear charging profiles.
This formulation, however, assumes that a charger is always available when needed, which is often impractical.
Another exception is our prior work in~\cite{Whitaker23}, where a discrete linear time-invariant dynamic system model is used, resulting in an exponential decay non-linear charge model.
This exponential decay model, however, is only a rough approximation of a typical non-linear charging profile.
Non-linear partial charging models that are also computationally attractive are not currently present in the literature without the aforementioned drawbacks.

\begin{table}
    \centering
    \caption{An overview of the features present in the literature.
        \\
        \normalfont{An asterisk (*) indicates a limited formulation.}}\label{tab:literature}
    \begin{tabular}{@{}lllllllll@{}}
        \toprule
                                                      & \multicolumn{3}{c}{\begin{tabular}[c]{@{}c@{}}
                Utility \\ Costs\end{tabular}} &
        \multicolumn{3}{c}{\begin{tabular}[c]{@{}c@{}}Partial \\ Charging\end{tabular}} &
        \multicolumn{2}{c}{\begin{tabular}[c]{@{}c@{}}Stochastic \\ Awareness\end{tabular}}                                                                                                                                  \\ \cmidrule(r){2-4}\cmidrule(rl){5-7}\cmidrule(l){8-9}
                                                      & \rotatebox{90}{Consumption}                   &
        \rotatebox{90}{Demand}                        &
        \rotatebox{90}{TOU}                           &
        \rotatebox{90}{Linear}                        &
        \rotatebox{90}{Non-linear}                    &
        \rotatebox{90}{Variable-Rate}                 &
        \rotatebox{90}{Static}                        &
        \rotatebox{90}{Dynamic}                                                                                                                                                        \\ \midrule
        \cite{Qin2016}                                & \cmark{}                                      & \cmark{}  & \xmark{} & \xmark{}  & \cmark{}* & \xmark{}  & \xmark{} & \xmark{} \\ 
        \cite{Zhou2020-Mixed}                         & \cmark{}                                      & \xmark{}  & \cmark{} & \xmark{}  & \xmark{}  & \xmark{}  & \xmark{} & \xmark{} \\ 
        \cite{Rinaldi2020}                            & \cmark{}                                      & \xmark{}  & \cmark{} & \xmark{}  & \xmark{}  & \xmark{}  & \xmark{} & \xmark{} \\ 
        \cite{He2020}                                 & \cmark{}                                      & \cmark{}  & \cmark{} & \cmark{}* & \xmark{}  & \cmark{}* & \xmark{} & \xmark{} \\ 
        \cite{Bagherinezhad2020}                      & \cmark{}                                      & \cmark{}  & \cmark{} & \cmark{}  & \xmark{}  & \xmark{}  & \xmark{} & \xmark{} \\ 
        \cite{Mortensen2023}                          & \cmark{}                                      & \cmark{}  & \cmark{} & \cmark{}  & \xmark{}  & \cmark{}  & \xmark{} & \xmark{} \\ 
        \cite{Duan2021}                               & \cmark{}                                      & \xmark{}  & \xmark{} & \xmark{}  & \xmark{}  & \xmark{}  & \cmark{} & \xmark{} \\ 
        \cite{Tang2019}                               & \cmark{}                                      & \xmark{}  & \xmark{} & \xmark{}  & \xmark{}  & \xmark{}  & \xmark{} & \cmark{} \\ 
        \cite{Chen2016}                               & \cmark{}                                      & \xmark{}  & \cmark{} & \cmark{}  & \xmark{}  & \xmark{}  & \xmark{} & \xmark{} \\ 
        \cite{Zhang2021}                              & \cmark{}                                      & \xmark{}  & \xmark{} & \xmark{}  & \cmark{}  & \xmark{}  & \xmark{} & \xmark{} \\ 
        \cite{Whitaker23}                             & \cmark{}                                      & \xmark{}  & \cmark{} & \xmark{}  & \cmark{}  & \xmark{}  & \xmark{} & \xmark{} \\ 
        \cite{Zhou2020}                               & \cmark{}                                      & \xmark{}  & \cmark{} & \cmark{}  & \xmark{}  & \xmark{}  & \xmark{} & \xmark{} \\ 
        \cite{Frendo2021}                             & \cmark{}*                                     & \cmark{}* & \xmark{} & \xmark{}  & \cmark{}* & \xmark{}  & \xmark{} & \xmark{} \\ 
        \cite{Jahic2019}                              & \cmark{}*                                     & \cmark{}  & \xmark{} & \xmark{}  & \cmark{}* & \xmark{}  & \xmark{} & \xmark{} \\ 
        \cite{Bie2021}                                & \cmark{}                                      & \xmark{}  & \xmark{} & \xmark{}  & \xmark{}  & \xmark{}  & \cmark{} & \xmark{} \\ 
        \cite{Wang2019}                               & \cmark{}                                      & \xmark{}  & \cmark{} & \xmark{}  & \xmark{}  & \xmark{}  & \xmark{} & \cmark{} \\ 
        \cite{Wang2021}                               & \cmark{}                                      & \xmark{}  & \cmark{} & \xmark{}  & \xmark{}  & \xmark{}  & \xmark{} & \cmark{} \\ 
        Proposed                                      & \cmark{}                                      & \cmark{}  & \cmark{} & \cmark{}  & \cmark{}  & \cmark{}  & \xmark{} & \cmark{}
    \end{tabular}
\end{table}

Most of the works in the state-of-the-art assume the charging rate to be fixed for a charger, as \cref{tab:literature} shows.
As the charging rate directly determines power draw, it can significantly affect demand cost, making it a useful method of decreasing costs.
For example,~\cite{Mortensen2023} uses a set of discrete charge rates with a linear model to reduce the maximum power draw, and consequently the demand cost, but is limited to a few rates.
However, only~\cite{Mortensen2023,He2020} could be found that use a variable rate charging model for BEB charge scheduling, the one limited to a few discrete rates, and the other assuming unlimited chargers.

As \cref{tab:literature} displays, the majority of the BEB charge scheduling methods neglect the presence of real-world noise and variability.
Some works, however, account for this stochasticity in both \textit{static}~\cite{Duan2021,Bie2021} and \textit{dynamic}~\cite{Tang2019,Wang2019,Wang2021} frameworks.
In~\cite{Duan2021,Bie2021,Tang2019}, a static, \textit{a priori} charge scheduling problem is solved to be robust to the expected sources of noise.
The approach taken by~\cite{Tang2019} uses a buffer time at the end of bus routes to maintain feasibility, while~\cite{Duan2021,Bie2021} directly incorporate the characteristics of the noise of energy consumption and route delays in a robust optimization problem.
In contrast,~\cite{Tang2019,Wang2019,Wang2021} utilize dynamic methods that receive feedback from and react to the real-world state.
In~\cite{Tang2019}, the buffered, \textit{a priori} scheduling problem is repeatedly solved with feedback from the environment, while~\cite{Wang2019,Wang2021} use a Markov Decision Process to repeatedly solve for an optimal policy for each bus based on feedback.
However, each of the methods that address stochasticity only consider TOU consumption costs and neglect TOU demand costs.
To the best of our knowledge, no work that considers stochasticity also considers the full TOU consumption and demand cost structure despite the significant effects of these cost components.
Furthermore, each of these works assume a BEB charges to full whenever scheduled with no partial charging, and no work could be found that included a partial charging model.
Accordingly, a significant gap in the literature exists when accounting for the intrinsic stochasticity of the real-world.

In light of this state of the art, this work seeks to address gaps in the literature by providing two primary contributions.
\begin{enumerate}
    \item A novel discrete, time-invariant, piecewise linear formulation is developed to model a non-linear constant-current/constant-voltage charging profile while considering variable-rate charging.
    \item A novel two-stage dynamic receding horizon planning hierarchy is developed to react to feedback from the environment and to fully account for TOU consumption and demand costs.
\end{enumerate}
These contributions are demonstrated within an extensive Monte Carlo simulation environment, demonstrating the ability of the planning hierarchy to cope with the stochastic environment.

The remainder of the paper will proceed with the formulation of the problem as a network flow problem and mixed integer linear program from previous work in \cref{sec:network-flow-MILP}.
\Cref{sec:novel-constraints} continues by detailing the novel modeling of the non-linear, variable-rate charging profile.
This is followed by \cref{sec:costs} which gives the cost formulations and the full static optimization model.
Following this, \cref{sec:receding-horizon} outlines the hierarchical dynamic planning structure, after which \cref{sec:results} presents the experimental findings and their analysis.
\Cref{sec:conclusion} concludes with a summary of the findings and their impact.

\section{A Network Flow Representation of the BEB Charging Schedule}\label{sec:network-flow-MILP}
The assignment of chargers to buses can be abstracted into a network-flow problem.
This network flow problem can be converted into a mixed-integer linear program (MILP) that allows modeling additional aspects and constraints of the system, such as battery charge level dynamics and constraints, power usage and demand costs, and logistic constraints.
This work follows a similar process for creating the network-flow and MILP models as~\cite{Whitaker23} and then extends the model to account for TOU demand and a non-linear CC/CV charging profile.
For the sake of completeness, the full models used in this work are given, with the modifications and extensions to~\cite{Whitaker23} appropriately derived and noted.
This section begins with a brief introduction to relevant graph and network-flow concepts in \Cref{ssec:graph_basics} and their relation to the BEB charge scheduling problem.
Finally, \Cref{ssec:base-constraints} details the constraints of the MILP relating to and extending the network-flow model.

\subsection{Graph and Network Flow From Previous Work~\texorpdfstring{\cite{Whitaker23}}{}}\label{ssec:graph_basics}
A directed graph \(G\) is defined as a set of vertices, \(V\),  and edges, \(E \subseteq V \times V\), i.e. \(G =\{V,E\} \).
Given two vertices in the graph, \(v_1, v_2 \in V\), an edge from \(v_1\) to \(v_2\) implies that the ordered pair \((v_1, v_2)\) is in the edge set, i.e., \((v_1, v_2) \in E\).

With the concept of \textit{flow} along edges, a network-flow problem can be represented by a graph.
For scheduling BEB charging when the BEB route schedules are known \textit{a priori}, each vertex can be considered a state for a charger, with edges representing the action space of chargers.
The amount of flow along an edge indicates the number of chargers performing the corresponding action.
A \textit{source} vertex introduces flow, or chargers, into the network, while \textit{sink} vertices remove flow.
All other vertices are \textit{intermediary} vertices, and maintain a balance of flow in and out of the vertex.
The edges in the graph correspond to one of 1) a charger transitioning to or from charging a BEB, 2) charging a BEB\@, or 3) doing nothing
.
Chargers that are in the same location and have the same characteristics are considered together as a \textit{charger type} in a sub-graph separate from other charger types.
A vertex is only present in a sub-graph when a bus will be available to charge with the corresponding charger type.

A simplified example of such a graph is shown in \cref{fig:bus_nodes} for a single charger type; a full graph contains other similar sub-graphs for each charger type.
\begin{figure}
    \centering
    \includegraphics[width=.9\linewidth]{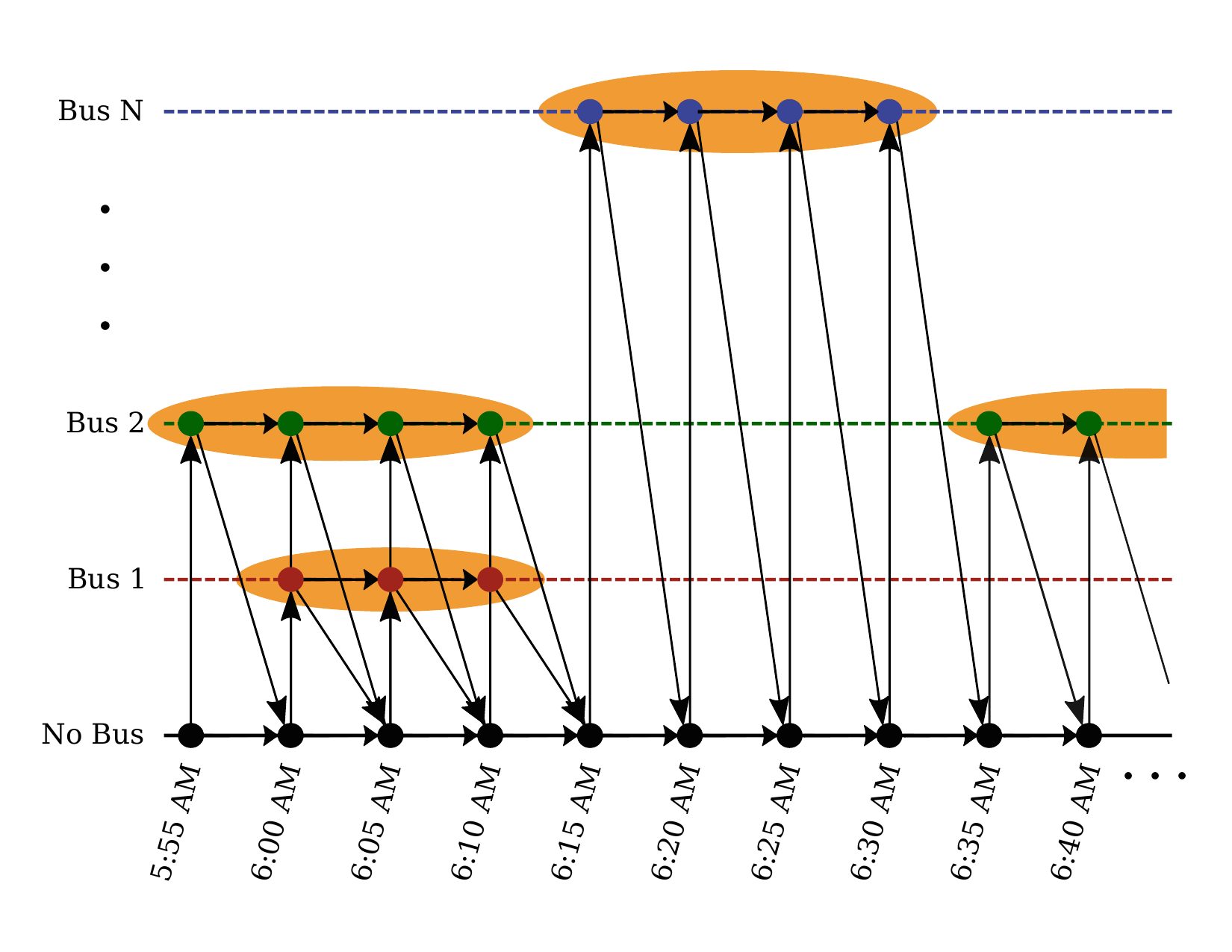}
    \caption{\label{fig:bus_nodes}
        A graph depicting the possible state and transitions for a single charger type over time for a simple {bus charging} example.
        The bottom row corresponds to the charger at rest while the other rows each correspond to charging a particular bus.
        Yellow ovals {show the vertex groups that} are placed around charging windows.
    }
\end{figure}
Vertices that are grouped together (represented with an orange oval) correspond to one \textit{visit} to the station.
This grouping applies across charger types as long as they correspond to the same visit to the station.


\subsection{Constraints from Previous Work~\texorpdfstring{\cite{Whitaker23}}{}}\label{ssec:base-constraints}
In converting to a MILP formulation, constraints are needed to correctly model the problem.
The first set of constraints maintains the relationships defined by the network flow graph.
A second group of constraints is added that ensures that each bus can only be charged by one charger and each charger can only charge one bus.
The final set of constraints is used to track the bus charge levels through charging and discharging.

\subsubsection{Flow balance}
Typical flow balance constraints ensure that the chosen path moves continuously from one vertex to another through the graph.
The vector \(\vec x = \transpose{\begin{bmatrix} x_1 & x_2 & \cdots & x_{n_e} \end{bmatrix}}\) consists of the flow variables for each edge.
Similarly, a vector \(f_l = \transpose{\begin{bmatrix} n_{c_l} & 0 & \cdots & 0 & -n_{c_l}\end{bmatrix}} \) contains the net flow into or out of each vertex for each charger type \(l\in\mathcal{L}\).
The incidence matrix, \(D_l\), (as defined in~\cite{Mesbahi2010}) is created for each sub-graph corresponding to a different charger type.
Each row of the expression \(D_l \vec x\) gives the net flow through a vertex and can be constrained as
\begin{align}\label{eq:flow_constraint}
    D_l \vec x & = f_l & \forall l.
\end{align}

\subsubsection{Vertex Grouping Constraint}
Connecting a BEB to a charger typically requires coordination and time.
Accordingly, it is undesirable to allow a bus to connect to a charger more than one time in a visit to a station.
By grouping the vertices by each bus's visit to the station and only allowing one flow to enter the group this can be enforced.
For the \(p^{th}\) visit by a bus to the station, denote the corresponding grouping of vertices as the set \(\mathcal{V}_p \subset V\).
Then, the index set of edges that enter \(\mathcal{V}_p\) is defined as
\begin{equation}
    \begin{split}
        \mathcal{I}_p = \{i | &e_i \in E \mbox{ where } e_i = (v_m, v_n) \mbox{, } \\
        &v_n \in \mathcal{V}_p \mbox{, and } v_m \in {V \setminus \mathcal{V}_p} \}
    \end{split}.
\end{equation}
The constraint that a bus be charged by at most a single charger when at the station can be written as
\begin{align}\label{eq:group_constraint}
    \sum_{i \in \mathcal{I}_p} x_i & \leq 1 & p & = 1, \ldots, n_g,
\end{align}
where \(n_g\) is the number of groups.
Combined with the constraint that \(x_i\) is integer, \REVONE{\eqref{eq:group_constraint}} ensures that
the number of chargers assigned to a bus during a visit is at most one.

\section{Non-Linear, Variable-Rate Charge Profile}\label{sec:novel-constraints}
A non-linear, variable-rate charging profile model allows for high-fidelity predictions of BEB charge levels, power usage, and subsequent utility costs.
In~\cite{Whitaker23}, a discrete, linear time-invariant system was used to generate an exponential decay non-linear charging profile that loosely approximates a constant-current-constant-voltage (CC-CV) charge profile.
This method is extended herein to form a discrete, piecewise linear time-invariant system that closely approximates an idealized CC-CV charging profile instead of an exponential decay model.

The formulation of the CC-CV system starts with a model in continuous-time that is exactly discretized to arrive at a discrete, piecewise linear time-invariant system in \cref{ssec:cc-cv-base-model}.
\Cref{ssec:cc-cv-variable-rate} adjusts the model to allow for variable rate charging with a close approximation.
Finally, the model is presented in the form of MILP constraints in \cref{ssec:cc-cv-constraints}

\subsection{Modelling a CC-CV Charging Profile in Continuous Time}\label{ssec:cc-cv-base-model}
Forming the CC-CV model is based in a continuous-time piecewise-linear time-invariant dynamic system of the CC-CV charging profile.

\begin{lemma}\label{lem:clti}
    The CC-CV charging profile with switching SOC of \(\eta_j\) for a bus, \(j\), with battery capacity \(E_j\) can be modeled in continuous-time by the piecewise-linear time-invariant continuous dynamic system in the charge level, \(s_j\):
    \begin{equation}
        \dot{s}_{j}(t)
        =
        \begin{cases}
            \Acc s_{j}(t) + \Bcc & 0 \le s_{j}(t) < \eta_j E_j \\
            \Acv s_{j}(t) + \Bcv & \eta_j E_j \le s_{j}(t)
        \end{cases}\label{eq:clti_charging}.
    \end{equation}
    where
    \begin{equation}\label{eq:clti_params}
        \begin{aligned}
            \Acc & = 0      & \Bcc & = \pcc                   \\
            \Acv & = -\expa & \Bcv & = \expa \pcc \tcc + \pcc
        \end{aligned},
    \end{equation}
    and \(\expa \) and \(\pcc \) are parameters of the model derived from the charger characteristics.
\end{lemma}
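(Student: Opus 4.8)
The plan is to construct the claimed piecewise system directly from the physics of the two charging regimes and then pin down the unknown coefficients by matching behavior across the switching charge level $\eta_j E_j$. I would start with the constant-current (CC) regime, $0 \le s_j < \eta_j E_j$. Since the charger delivers an essentially constant charging power here, the stored energy $s_j$ accumulates at a constant rate, which I identify with the charger parameter $\pcc$. Thus $\dot s_j = \pcc$, an affine function of $s_j$ with zero slope, giving $\Acc = 0$ and $\Bcc = \pcc$ directly.

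For the constant-voltage (CV) regime, $s_j \ge \eta_j E_j$, I would invoke a linearized equivalent-circuit battery model: holding the terminal voltage fixed while the open-circuit voltage grows approximately affinely with the stored charge forces the charging current---and hence the charging power $\dot s_j$---to depend affinely on $s_j$ with a strictly negative slope, reproducing the familiar exponential decay of current under CV charging. This gives $\dot s_j = \Acv s_j + \Bcv$ with $\Acv = -\expa < 0$, where $\expa$ collects the ratio of the open-circuit-voltage slope to the series resistance; this matches the claimed value of $\Acv$.

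It then remains to fix $\Bcv$, which I would obtain from a continuity condition at the switch. Physically the CC$\to$CV transition occurs exactly when the terminal voltage first reaches the CV setpoint, so the charging rate is continuous there. Because the CC regime starts from an empty battery and charges at the constant rate $\pcc$, it reaches the switching level after the CC duration $\tcc$, yielding the identity $\pcc \tcc = \eta_j E_j$. Equating the CC rate $\pcc$ with the CV rate $-\expa\,\eta_j E_j + \Bcv$ evaluated at $s_j = \eta_j E_j$, and substituting $\eta_j E_j = \pcc\,\tcc$, gives $\Bcv = \pcc + \expa\,\pcc\,\tcc$, which is exactly \eqref{eq:clti_params}. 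Finally, I would observe that every coefficient is a constant independent of $t$, so each branch is linear time-invariant and the overall model is piecewise-linear time-invariant, completing the identification.

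The step I expect to be the main obstacle is the CV branch. Establishing that $\dot s_j$ is affine in $s_j$ commits me to the linearized equivalent-circuit assumptions, and the constant term $\Bcv$ is correct only once the derivative-continuity condition at the switch is combined with the identity $\pcc\,\tcc = \eta_j E_j$; getting the sign and the constant right simultaneously is the delicate part. By comparison, the CC branch and verifying time-invariance are routine.
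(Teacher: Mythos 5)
Your proposal reaches the correct parameter values and its key computation is sound, but it takes a genuinely different route from the paper. The paper's proof starts from an accepted time-domain model of the CC-CV profile (the power-draw model of~\cite{Wu2020}): power is constant at \(\pcc\) during CC and decays exponentially as \(\pcc e^{-\expa(t-\tcc)}\) during CV. It integrates this profile to obtain \(s_j(t)\) in closed form, solves the CV-phase expression for \(p_j(t)\) in terms of \(s_j(t)\), uses \(\dot{s}_j(t) = p_j(t)\) to obtain the state-space form, and finally converts the time-based switching condition \(t \ge \tcc\) into the state-based condition \(s_j \ge \eta_j E_j\) via \(s_j(\tcc) = \eta_j E_j\). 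You instead postulate the affine-in-state structure of each branch from a linearized equivalent-circuit argument and then pin down \(\Bcv\) by continuity of \(\dot{s}_j\) at the switch together with \(\pcc\tcc = \eta_j E_j\); that computation is correct and yields \(\Bcv = \pcc + \expa\pcc\tcc\) exactly as claimed. The trade-off: your route is shorter and physically illuminating, but the affinity of \(\dot{s}_j\) in \(s_j\) during the CV phase---which is the actual nontrivial content of the lemma, and which the paper \emph{derives} from the exponential-decay profile---enters your argument as a modeling assumption. Moreover, strictly speaking a linearized equivalent circuit gives \emph{current} affine in \emph{charge}; since \(s_j\) here is energy and \(\dot{s}_j\) is power, affinity requires a further approximation (e.g., treating the voltage as constant in the power conversion), a step the paper sidesteps entirely by working with the established power profile. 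To make your version airtight relative to the reference model, add the one-line observation that your CV branch, being an affine ODE with slope \(-\expa\), integrates to exactly the exponential power decay \(p_j(t) = \pcc e^{-\expa(t-\tcc)}\) starting from the continuity value \(\pcc\), so the two characterizations of the CV phase coincide.
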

\begin{proof}

    See \cref{pf:clti}

\end{proof}

As a linear time-invariant system,~\cref{eq:clti_charging} can be discretized exactly for a given discretization step size of \(\delta \), as in~\cite{Hespanha2018}.
This is stated more formally in \Cref{lem:dlti}.
\begin{lemma}\label{lem:dlti}
    The CC-CV charging profile can be modeled by the piecewise-linear time-invariant discrete dynamic system:
    \begin{equation}\label{eq:dlti_charging}
        s_{j,k+1} =
        \begin{cases}
            \barAcc s_{j,k} + \barBcc & 0 \le s_{j,k} < \eta_j E_j \\
            \barAcv s_{j,k} + \barBcv & \eta_j E_j \le s_{j,k}
        \end{cases}
    \end{equation}
    where
    \begin{equation}\label{eq:dlti_params}
        \begin{aligned}
            \barAcc & = e^{\Acc\delta} = 1 & \barBcc & = \int_0^\delta e^{\Acc\tau} d\tau \Bcc = \Bcc\delta                     \\
            \barAcv & = e^{\Acv\delta}     & \barBcv & = \int_0^\delta e^{\Acv\tau} d\tau \Bcv = \frac{(\barAcv - 1)\Bcv}{\Acv}
        \end{aligned}.
    \end{equation}
\end{lemma}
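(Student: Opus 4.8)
The plan is to exploit the fact that, on each side of the switching charge level \(\eta_j E_j\), the continuous dynamics in~\eqref{eq:clti_charging} reduce to a scalar affine LTI system of the form \(\dot{s}_j = A s_j + B\) with constant coefficients. For such a system driven by a constant term, the variation-of-constants formula gives the exact state transition over one step of length \(\delta\) as
\begin{equation}
    s_j(t+\delta) = e^{A\delta} s_j(t) + \left( \int_0^\delta e^{A\tau}\,d\tau \right) B,
\end{equation}
which is precisely the zero-order-hold discretization of~\cite{Hespanha2018}. I would apply this formula separately to the CC and CV regions and verify that the resulting coefficients reproduce~\eqref{eq:dlti_params}.

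For the constant-current region I would substitute \(A = \Acc = 0\) and \(B = \Bcc\), so that \(e^{\Acc\delta} = 1\) and \(\int_0^\delta e^{\Acc\tau}\,d\tau = \delta\), immediately giving \(\barAcc = 1\) and \(\barBcc = \Bcc\delta\). For the constant-voltage region I would substitute \(A = \Acv = -\expa \neq 0\) and \(B = \Bcv\); then \(\barAcv = e^{\Acv\delta}\) and, integrating the scalar exponential, \(\int_0^\delta e^{\Acv\tau}\,d\tau = (e^{\Acv\delta}-1)/\Acv = (\barAcv - 1)/\Acv\), which yields \(\barBcv = (\barAcv - 1)\Bcv/\Acv\). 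These are routine integrations whose only prerequisite is that \(\expa\) be nonzero so the division is well defined, and they match~\eqref{eq:dlti_params} term by term.

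The subtle point — and the step I expect to require the most care — is the \emph{piecewise} nature of the system: the transition formula above is exact only over an interval on which the dynamics do not switch. Because the discrete recursion~\eqref{eq:dlti_charging} selects the active mode using the state \(s_{j,k}\) at the \emph{start} of the step, the discretization is exact whenever the trajectory remains entirely in one region over \([t, t+\delta]\), but a single step that crosses the threshold \(\eta_j E_j\) is only approximated by one affine map rather than the true concatenation of a CC segment and a CV segment. I would therefore argue either that \(\delta\) is chosen small relative to the charging timescale, so that the error incurred on the (at most one) crossing step is negligible, or make explicit the modeling assumption that mode switches are treated as occurring at discretization boundaries. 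In either case the per-mode transitions are exact by the calculation above, and the only deviation from the continuous dynamics is confined to the lone switching step.
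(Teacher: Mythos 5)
Your proof is correct and takes essentially the same route as the paper: both apply the standard zero-order-hold discretization of a scalar affine LTI system with constant input, \(\bar{A} = e^{A\delta}\) and \(\bar{B} = \left(\int_0^\delta e^{A\tau}\,d\tau\right) B = A^{-1}(\bar{A}-I)B\) when \(A\) is non-singular, separately to the CC mode (\(A=\Acc=0\), \(B=\Bcc\)) and the CV mode (\(A=\Acv=-\expa\), \(B=\Bcv\)), recovering~\eqref{eq:dlti_params} term by term. Your closing observation about the step that crosses the threshold \(\eta_j E_j\) is a caveat the paper's proof passes over in silence, so your treatment is, if anything, slightly more careful than the original.
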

\begin{proof}
    See~\cref{pf:dlti}.
\end{proof}

\subsection{Variable-rate Mixed Integer Charging Model}\label{ssec:cc-cv-variable-rate}
The discrete linear system~\eqref{eq:dlti_charging} only holds while the bus is charging, and differs in specific parameter values between charger types.
To unify the treatment of when a bus is charging and when it is not, a slack variable, \(g_{j,k,l}\), is introduced.
This variable represents the energy gained by bus \(j\) at charger type \(l\) from discrete time \(k\) to \(k+1\) and should take on a value of zero if bus \(j\) did not charge at a charger type \(l\) from time \(k\) to \(k+1\).

For notational convenience, a mapping, \(\sigma(j,k,l)\), is defined that gives the edge index corresponding to bus \(j\) charging with charger type \(l\) starting at time \(k\) (i.e., the edge spans from \(k\) to \(k+1\)).
This provides a method to easily select the edge flow that corresponds to a gain variable.

To achieve the desired behavior of the gain variable several constraints are introduced, beginning with those that enforce a value of zero if the bus is not charging,
\begin{equation}\label{eq:gain_outer_bounds}
    \begin{aligned}
        g_{j,k,l} & \ge 0                     \\
        g_{j,k,l} & \le E_j x_{\sigma(j,k,l)}
    \end{aligned}.
\end{equation}
The constraints in~\eqref{eq:gain_outer_bounds} utilize a big-M style formulation to enforce equality to zero when \(x_{\sigma (j,k,l)} = 0\) but allow \(g_{j,k,l}\) to take on other values when \(x_{\sigma (j,k,l)} = 1\).

When \(x_{\sigma (j,k,l)} = 1\) additional constraints must ensure that \(g_{j,k,l}\) takes on the appropriate value.
If \(x_{\sigma(j,k,l)} = 1\), the gain is equivalent to \(s_{j,k+1} - s_{j,k}\).
Given \(s_{j, k+1}\) in~\eqref{eq:dlti_charging}, \(g_{j,k,l}\) can be written in terms of \(s_{j,k}\) as
\begin{align}
    g_{j,k,l} & =
    \begin{cases}
        (\barAcc - 1) s_{j,k} + \barBcc & 0 \le s_{j,k} < \eta_j E_j \\
        (\barAcv - 1) s_{j,k} + \barBcv & \eta_j E_j \le s_{j,k}
    \end{cases}, \nonumber \\
    \intertext{which, since \(\Acc = 1\), simplifies to}
              & =
    \begin{cases}
        \barBcc                         & 0 \le s_{j,k} < \eta_j E_j \\
        (\barAcv - 1) s_{j,k} + \barBcv & \eta_j E_j \le s_{j,k}
    \end{cases}.
    \label{eq:gain_equality}
\end{align}
This gives a piecewise linear equality constraint to enforce the value of \(g_{j,k,l}\).

While it would be possible to formulate this piecewise linear constraint using specially ordered sets of type two\footnote{See~\cite{Chen2011} for an introduction to specially ordered sets and their application to constraint modelling.}, this would introduce many additional binary variables.
Additionally, in many cases it is desirable to allow the charging rate to vary within the capabilities of the charger.
Variable rate charging allows reducing the power draw and energy usage to only what is strictly necessary, potentially resulting in additional cost reductions.

Accordingly, by only constraining \(g_{j,k,l}\) to be bounded above by~\eqref{eq:gain_equality}, instead of maintaining strict equality, the effective charge rate is allowed to vary.
This results in the ideal constraint
\begin{align}\label{eq:ideal_gain_constraint}
    g_{j,k,l} & \le
    \begin{cases}
        \barBcc                         & 0 \le s_{j,k} < \eta_j E_j \\
        (\barAcv - 1) s_{j,k} + \barBcv & \eta_j E_j \le s_{j,k}
    \end{cases}.
\end{align}
Enforcing this constraint exactly, however, would still require the use of a specially ordered set of type two and additional binary variables to model the switching point of the inequality.

This can be overcome by using a conservative concave approximation of~\eqref{eq:ideal_gain_constraint} that relaxes the switching point of the two lines of~\eqref{eq:ideal_gain_constraint}.
This is more formally stated in~\Cref{lem:approx-gain-constraint}.
\begin{lemma}\label{lem:approx-gain-constraint}
    The inequality~\eqref{eq:ideal_gain_constraint} is conservatively approximated by
    \begin{equation}\label{eq:gain_bound_base}
        \begin{aligned}
            g_{j,k,l} & \le \barBcc                         \\
            g_{j,k,l} & \le (\barAcv - 1) s_{j,k} + \barBcv
        \end{aligned}.
    \end{equation}
    Furthermore, for a given desired approximation error, \(\epsilon_d\), there exists a discretization step size, \(\delta \), for which the approximation error of~\eqref{eq:gain_bound_base}, \(\epsilon \), is bounded by \(\epsilon_d\) (\(i.e., \lvert\epsilon\rvert \le \epsilon_d\)).
\end{lemma}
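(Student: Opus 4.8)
The plan is to prove the two claims separately: first that \eqref{eq:gain_bound_base} is a conservative, concave relaxation of the piecewise bound \eqref{eq:ideal_gain_constraint}, and second that the resulting error can be driven below any prescribed $\epsilon_d$ by shrinking the step size $\delta$.

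For conservativeness, I would observe that the right-hand side of \eqref{eq:gain_bound_base} is the pointwise minimum of the two affine maps $s_{j,k}\mapsto\barBcc$ and $s_{j,k}\mapsto(\barAcv-1)s_{j,k}+\barBcv$, and is therefore concave. Imposing both inequalities simultaneously is equivalent to requiring $g_{j,k,l}\le\min\{\barBcc,\,(\barAcv-1)s_{j,k}+\barBcv\}$. Since this minimum is no larger than either affine term, it is in particular no larger than whichever term is active in \eqref{eq:ideal_gain_constraint} on each of the two SOC regions. Hence every pair $(g_{j,k,l},s_{j,k})$ feasible for \eqref{eq:gain_bound_base} is also feasible for \eqref{eq:ideal_gain_constraint}: the relaxation only removes admissible gain, never adds it, which is precisely what ``conservative'' means.

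For the error bound I would first locate where the two constraints disagree. Because $\Acv=-\expa<0$ gives $\barAcv=e^{\Acv\delta}\in(0,1)$, the CV affine term is strictly decreasing in $s_{j,k}$, so the two lines intersect at the single point $s^\star=(\barBcc-\barBcv)/(\barAcv-1)$. The pointwise gap between the ideal bound and the approximation vanishes outside the interval between $s^\star$ and the switching level $\eta_j E_j$, and on that interval it equals the (positive) difference of the two affine terms; being affine there, it attains its maximum at the endpoint $s_{j,k}=\eta_j E_j$. I would therefore take the approximation error to be
\[
\epsilon=\bigl|\barBcc-\bigl[(\barAcv-1)\eta_j E_j+\barBcv\bigr]\bigr|.
\]
Substituting \eqref{eq:dlti_params} and using the continuity of the continuous CC-CV profile at the switching SOC established in \cref{lem:clti} --- which forces $\pcc\tcc=\eta_j E_j$, equivalently making the CC and CV instantaneous rates agree at $s=\eta_j E_j$ --- the CV term evaluated at the switch simplifies to $(\pcc/\expa)(1-e^{-\expa\delta})$, while $\barBcc=\pcc\delta$. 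Thus $\epsilon=\pcc\,\bigl|\delta-(1-e^{-\expa\delta})/\expa\bigr|$, and a Taylor expansion of $e^{-\expa\delta}$ yields $\epsilon=\tfrac{1}{2}\pcc\expa\,\delta^2+O(\delta^3)$. Since $\epsilon$ is continuous in $\delta$ with $\epsilon\to 0$ as $\delta\to 0$, for any $\epsilon_d>0$ a sufficiently small $\delta$ gives $\epsilon\le\epsilon_d$.

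The delicate part is the error analysis rather than the conservativeness argument. The key geometric step is recognizing that the worst-case discrepancy sits exactly at the switching SOC $\eta_j E_j$ (not in the interior), which requires identifying the crossing point $s^\star$ and confirming the affine gap is maximized at the endpoint. The key analytic step is that the derivative matching at the switch, inherited from the continuous model, cancels the leading $O(\delta)$ contribution, leaving the quadratic decay $O(\delta^2)$; once that cancellation is seen, the remainder reduces to a routine Taylor estimate.
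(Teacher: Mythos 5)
Your proof is correct and follows essentially the same route as the paper's: both replace the piecewise bound by the concave pointwise minimum of the two affine functions, locate the intersection point $s^{\star}=(\barBcc-\barBcv)/(\barAcv-1)$, identify the worst-case gap at the true switching SOC via the cancellation $\pcc\tcc=\eta_j E_j$, and drive the resulting error $\pcc\bigl(\delta-(1-e^{-\expa\delta})/\expa\bigr)$ to zero as $\delta\to 0$. Your two refinements --- obtaining conservativeness directly from the min structure without first establishing $s^{\star}<\eta_j E_j$, and the Taylor expansion exhibiting the explicit $\tfrac{1}{2}\pcc\expa\delta^{2}$ decay rate --- are sound but do not change the substance of the argument.
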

\begin{proof}
    See~\cref{pf:approx-gain-constraint}
\end{proof}

\subsection{Formulating the Charging Profile Model for Optimization}\label{ssec:cc-cv-constraints}
To fully specify the charge level constraints, several useful sets are defined.
For this purpose, an indicator function, \(\gamma(j,k,l)\), is first introduced
\begin{equation*}
    \gamma (j,k,l) =
    \begin{cases}
        1 & \text{if charging bus \(j\) with charger type \(l\)} \\[-0.4em] &\text{from time \(k\) to \(k+1\) is possible} \\
        0 & \text{otherwise}
    \end{cases}.
\end{equation*}
Two parameterized sets are defined: \(\mathcal{L}(j,k)\) represents the charger types available for a bus \(j\) at a given time \(k\), and \(\mathcal{K}(j,l)\) represents the times when bus \(j\) is able to charge with charger type \(l\).
These are defined as follows
\begin{equation*}
    \begin{aligned}
        \mathcal{L}(j,k) & = \left\{l \mathbin{|} \gamma(j,k,l) = 1 \right\} \\ 
        \mathcal{K}(j,l) & = \left\{k \mathbin{|} \gamma(j,k,l) = 1 \right\} 
    \end{aligned}.
\end{equation*}
Additionally, for notational convenience, the quantity \(g(j,k)\) represents all of the charge gained by bus \(j\) at time \(k\):
\begin{equation*}
    g(j,k) = \sum_{l \in \mathcal{L}(j,k)} g_{j,k,l}.
\end{equation*}

The parameter \(d_{j,k}\) is also introduced to represent the discharge of bus \(j\) from \(k\) to \(k+1\) while on route.
Then, the charge level constraints can be fully specified
\begin{equation}\label{eq:charge_level_constraint}
    s_{j, k+1} =
    \begin{cases}
        s_{j,k} + g(j,k)  &
        \begin{aligned}
             & \text{if } \exists l \text{ s.t.
            }                                   \\[-0.4em]
             & \gamma(j,k,l) = 1
        \end{aligned}           \\
        s_{j,k} - d_{j,k} & \text{otherwise}
    \end{cases} \quad \forall j, \forall k
\end{equation}
with \(g_{j,k,l}\) defined through a combination of \cref{eq:gain_outer_bounds} and \cref{eq:gain_bound_base}
\begin{equation}\label{eq:gain_constraints}
    \left.\begin{aligned}
        g_{j,k,l} & \le \Bcc                            \\
        g_{j,k,l} & \le (\barAcv - 1) s_{j,k} + \barBcv \\
        g_{j,k,l} & \le 0 { + E_j{x_{\sigma(j,k,l)}}}   \\
        g_{j,k,l} & \ge 0                               \\
    \end{aligned}\right\}  
    \begin{aligned}\mbox{\quad} &\forall j, \forall l, \\ &\forall k\in \mathcal{K}(j,l)\end{aligned}.
\end{equation}

\section{Minimization of Utility Costs}\label{sec:costs}
A significant source of operational costs for electric buses comes from charging the batteries and is governed by the electricity utility costs.
Electricity utility costs often consist of two components: a ``consumption'' portion, based on the energy consumed over the billing period, and a ``demand'' portion based on the peak or maximum power draw that occurs within the billing period, e.g.,~\cite{RMPSch8}.
It is also common for both of these components to have time-dependent rates, typically termed time-of-use (TOU) pricing, where the rate for consumption and/or demand cost changes based on the time of day\footnote{
    For example, the Rocky Mountain Power company's Utah rate schedule eight includes both consumption and demand costs with both on- and off-peak TOU pricing for consumption and demand~\cite{RMPSch8}.
}.
The following formulation for the cost in the optimization problem, therefore, seeks to support both consumption and demand costs with TOU pricing for each.

\subsection{Consumption Cost}\label{ssec:energy_cost}
The energy cost is a charge per kilowatt-hour (kWh) of energy consumed.
There are different rates for high- and low-demand times.
Because the gain variables, \(g_{j,k,l}\), already represent the energy in kilowatt-hours used during each discrete time period, this cost is a linear combination of the \(g_{j,k,l}\).
Accordingly, the total energy cost can be calculated as
\begin{equation*}
    \sum_{k\in\mathcal{K}} c_{c,k} \sum_{j\in\mathcal{J}} g(j,k) \end{equation*}
where each \(c_{c, k}\) is a time-dependent cost on the energy consumed in the corresponding time step and, thus, encapsulates the TOU consumption pricing.

\subsection{Baseline Demand Cost}\label{ssec:facilities_cost}
In its simplest form, the demand cost might be based purely on the maximum power draw.
However, it is common for utility costs to include an averaging component to smooth near-instantaneous behavior.
The model herein is based on a 15-minute moving window demand cost from the Rocky Mountain Power Schedule 8~\cite{RMPSch8}.
In other words, the cost is
\begin{equation*}
    c_{b} \max_t (p_{15}(t))
\end{equation*}
where \(c_b\) is the baseline demand cost multiplier and \(p_{15}(t)\) is the average power over the 15 minutes ending at time \(t\).
For the sake of generalization, an arbitrary-sized time period, \(\Delta \), is considered for averaging instead of a 15-minute period.
The average power over a time period \(\Delta \) ending at time \(t\) can be calculated as
\begin{equation*}
    p_{\Delta}(t) = \frac{1}{\Delta} \int_{t-\Delta}^t p(\tau) d\tau.
\end{equation*}
Defining \(e_{\Delta}(t) \triangleq \int_{t-\Delta}^t p(\tau) d\tau \) as the energy over the time period \(\Delta \) ending at \(t\), the average power is
\begin{equation*}
    p_{\Delta}(t) = \frac{e_{\Delta}(t)}{\Delta}.
\end{equation*}

To relate this to the gain variables of the MILP we recall that the gain variable \(g_{j,k,l}\) is the energy used by bus \(j\), with charger type \(l\) over the time period from \(k\) to \(k+1\).
Thus, the total energy used by charging buses over a discretization step, \(\delta \) (from \(k\) to \(k+1\)), is
\begin{equation}
    \label{eq:e_delta_k_buses}
    e_{\delta,k}^{buses} = \sum_{j\in\mathcal{J}} g(j,k).
\end{equation}
However, there may be additional, uncontrolled loads that also contribute to the average power draw.
A prediction of these uncontrolled loads can be incorporated to discourage charging buses during times of high draw from these uncontrolled loads.
It is assumed that a prediction of the discrete-time energy usage of these uncontrolled loads can be obtained as \(e^{load}_{\delta,k}\).
The total energy used over a discretization step is
\begin{equation}\label{eq:e_delta_k}
    e_{\delta,k} = e^{load}_{\delta,k} + e^{buses}_{\delta,k}.
\end{equation}

Accordingly, a discrete average power can be calculated
\begin{equation}\label{eq:discrete-avg-power}
    p_{\Delta,k} = \frac{\sum_{k'=k-m}^{k-1}e_{\delta,k'}\footnotemark}{\Delta}
\end{equation}
with \(m = \frac{\Delta}{\delta}\).
\footnotetext{
    The discretization step size \(\delta \) may not evenly divide \(\Delta \).
    In this case, \(m = \left \lfloor \frac{\Delta}{\delta} \right \rfloor \) and the numerator would be
    \begin{equation}\label{eq:energy-delta-general}
        \sum_{k'=k-m}^{k-1} e_{\delta,k'} + \frac{\Delta \mod{\delta}}{\delta} e_{\delta, k-m-1}.
    \end{equation}
}
The maximum of~\eqref{eq:discrete-avg-power} over all time instances (over all \(k\)) is used for the baseline demand cost, i.e.,
\begin{equation*}
    c_b \max_k p_{\Delta,k}.
\end{equation*}
This \(\max_k p_{\Delta,k}\) can be calculated in the MILP by introducing a slack variable, \(p_{max}\), with the constraints 
\begin{equation*}\label{eq:max-avg-power}
    p_{max} \ge p_{\Delta,k} \mbox{, } \forall k
    .
\end{equation*}
Together, these constraints ensure that \(p_{max}\) is at least as large as the largest \(p_{\Delta,k}\), and the minimization of the baseline demand cost will drive \(p_{max}\) to equal the largest of the \(p_{\Delta,k}\)
.

This allows the baseline demand cost to be written as
\begin{equation*}
    c_b p_{max}
    .
\end{equation*}

\subsection{TOU Demand Cost}\label{ssec:power_cost}
The TOU demand cost is an additional cost on the maximum average power used during high-demand times of the day.
It is very similar to the baseline demand charge except that the TOU demand cost is calculated only for the \(k\) corresponding to the high-demand times (e.g., 6--9 AM and 6--10 PM).
Defining \(\mathcal{K}_{TOU}\) as the set of \(k\) that are within the high-demand times \(p_{max, TOU}\) is introduced as another slack variable to the MILP, 
\begin{equation*}
    p_{max, TOU} \geq p_{\Delta,k} \text{,}\quad k \in \mathcal{K}_{TOU},
\end{equation*}
that, similar to~\eqref{eq:max-avg-power}, will be forced to equality through the cost minimization.
The TOU demand cost can be expressed as
\begin{equation*}
    c_{TOU} p_{max, TOU}
    .
\end{equation*}

\subsection{Base Model}
For notational convenience, several variables are introduced
\begin{align*}
    x          & = \{x_i \forall i\in\mathcal{I}\}                                         \\
    s          & = \{s_{j,k} \forall j\in\mathcal{J}, k\in\mathcal{K}\}                    \\
    g          & = \{g_{j,k,l} \forall j\in\mathcal{J}, k\in\mathcal{K}, l\in\mathcal{L}\} \\
    e_{\delta} & = \{e_{\delta,k} \forall k\in\mathcal{K}\}                                \\
    e_{\Delta} & = \{e_{\Delta,k} \forall k\in\mathcal{K}\}                                \\
    p_{\Delta} & = \{p_{\Delta,k} \forall k\in\mathcal{K}\}
\end{align*}
Combining the constraints and costs as previously described, the base MILP for a static plan is written as follows:
\begin{mini!}<b>
{\substack{x,s,g,\\e_{\delta},e_{\Delta},p_{\Delta},\\p_{max},\\p_{max,TOU}}}{
    \begin{aligned}
    &\sum_{k\in\mathcal{K}} c_{c,k} \sum_{j\in\mathcal{J}} g (j,k) \\
    &\quad + c_b p_{max} + c_{TOU} p_{max,TOU}
    \end{aligned}
}
{\label{eq:base-milp}}{}
\addConstraint{D_l \vec{x} }{= f_l,}{\forall l\in\mathcal{L}}
\addConstraint{\sum_{i \in \mathcal{I}_p} x_i}{\leq 1,}{\forall p \in\mathcal{P}}
\addConstraint{s_{j,k+1}}{
    =\begin{cases}
        s_{j,k} + g (j,k) & \text{if } \exists l \text{ s.t. } \\[-0.4em]
                          & \gamma(j,k,l) = 1                  \\
        s_{j,k} - d_j     & \text{otherwise}
    \end{cases},
    \;}{\begin{aligned} \forall j &\in\mathcal{J} \\ \forall k &\in\mathcal{K} \end{aligned}}
\addConstraint{
    \begin{aligned}
        g_{j,k,l} \\
        g_{j,k,l} \\
        g_{j,k,l} \\
        g_{j,k,l}
    \end{aligned}
}{
    \begin{alignedat}{1}
        & \leq \Bcc \\ 
        & \leq (\barAcv - 1) s_{j,k} + \barBcv \\ 
        & \leq 0 + E_j x_{\sigma(j,k,l)} \\
        & \geq 0 \\
    \end{alignedat},
}{
    \begin{aligned}\forall j &\in\mathcal{J} \\ \forall l &\in\mathcal{L} \\ \forall k &\in \mathcal{K} (j,l)\end{aligned}
}
\addConstraint{e_{\delta,k}}{= e^{load}_{\delta,k} + \sum_{j\in\mathcal{J}} g (j,k),}{\forall k \in \mathcal{K}}
\addConstraint{p_{\Delta,k}}{= \frac{\sum_{k'=k-m}^{k-1}e_{\delta,k'}}{\Delta},}{\forall k\in\mathcal{K}}
\addConstraint{p_{max}}{\ge p_{\Delta,k},}{\forall k \in\mathcal{K}} 
\addConstraint{p_{max,TOU}}{\ge p_{\Delta,k},}{\forall k\in\mathcal{K}_{TOU}} 
\end{mini!}

\section{Receding Horizon Formulation}\label{sec:receding-horizon}
The base model of~\cref{eq:base-milp} is able to generate optimal plans for a given discretization.
These plans, however, are static, or open-loop, meaning that there is no way to react to noise or model mismatches.
However, as the noise is assumed to be zero-mean, the base-model plan is expected to still be a near-optimal reference to attempt to follow, on average.
This suggests a two-level hierarchical approach, as visualized in \cref{fig:hierarchical-planning}, that is the proposed method of this work.
The top-level of the hierarchy consists of the base-model forming a long-term static plan.
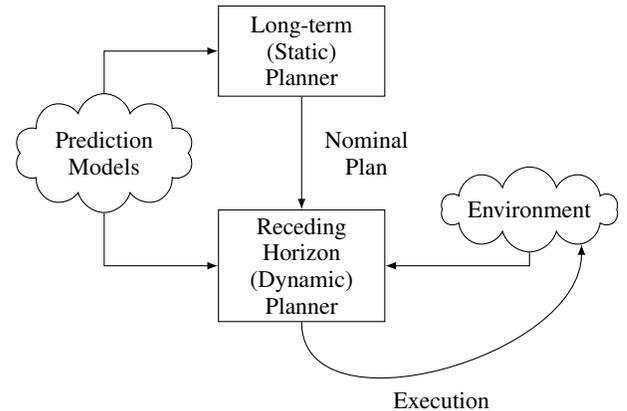
\begin{figure}[b]
    \centering
    \begin{tikzpicture}
        \small
        \tikzstyle{block} = [draw, align=center, minimum height=1.2cm, text width=2cm]
        \node[block](long-term){Long-term (Static) Planner};
        \node[block, below=1.5cm of long-term](receding){Receding Horizon (Dynamic) Planner};
        \draw[-latex] (long-term.south) -- (receding.north) node[midway, right, align=center, text width=1.5cm](between){Nominal Plan};
        \node[cloud, cloud ignores aspect, draw, inner xsep=0pt, right=0.75cm of receding.north east](env){Environment};
        \node[cloud, cloud ignores aspect, align=center, text width=2cm, draw, inner xsep=-6pt, left=1.5cm of between](predict){Prediction Models};

        \draw[-latex] (predict.north) |- (long-term.west);
        \draw[-latex] (predict.south) |- (receding.west);
        \draw[-latex] (env.south) |- (receding.east);
        \draw[-latex] (receding.south) to[in=270, out=270] node[below=0.2cm] {Execution} (env.puff 7);
    \end{tikzpicture}
    \caption{The two-level hierarchical planning method and its interactions with the environment}\label{fig:hierarchical-planning}\label{fig:rapid-planner}
\end{figure}
The top-layer plan uses a coarser resolution to be able to form a plan for the whole day that captures long-term behaviors and constraints.
The bottom-layer is a planner that takes as input the long-term static plan as a reference and feedback from the environment.
The bottom-layer planner regularly outputs a plan with finer resolution than the static planner, allowing rapidly reacting to the environment~\cite{Albus1999}.

In this work a \textit{receding horizon} controller is used as the bottom-level planner.
The receding horizon controller uses the current state of the buses to plan over a relatively short time horizon.
A single time step of this plan is executed, after which feedback from the environment is collected (e.g., current BEB charge levels).
The controller continues by forming a plan over a new horizon, which has ``receded'' one time step beyond the previous horizon.
Receding horizon approaches typically use an optimization model with a ``running'' cost (objective) that scores the state and control trajectory, and a ``terminal'' cost (objective) that scores the state at the end of the horizon.
The terminal cost drives the system toward a set of desired states while the running cost affects the transient behavior.
The dynamics and other state and control constraints are enforced in the optimization model.
It is important to stress that despite using optimization to form a plan for each horizon, there is no general guarantee of optimality with a receding horizon approach.

To obtain the model to use in the receding horizon planner for this problem, the existing optimization model is modified to fit the receding horizon paradigm.
Accordingly, the following adjustments are made to the base model~\eqref{eq:base-milp}
\begin{itemize}
    \item A terminal cost is added to the objective function to drive the controller toward the charge levels given by the top-layer reference plan.
    \item Constraints are added that continue to ensure that a bus charges at most once during a visit to the station.
\end{itemize}

The terminal cost is formed using a penalty on the difference between a bus's terminal charge level in the receding horizon plan and its charge level predicted by the top-level plan.
To add this to the MILP, a slack variable, \(s_{j,err}\), is introduced as
\begin{equation}\label{eq:terminal-cost-error}
    \begin{aligned}
        s_{j,err}
         & \ge \left\{ 
        \begin{aligned}
             & s_{j,T,des} - s_{j,T}     \\
             & s_{j,T} - s_{j,T,des} \ ,
        \end{aligned}
        \right.
    \end{aligned}
\end{equation}
where \(T\) is the terminal time of the horizon, \(s_{j,T}\) is the receding horizon terminal charge level for bus \(j\), and \(s_{j,T,des}\) is the predicted charge level of the bus at \(T\).
Each \(s_{j,err}\) is added to the cost function, and consequentially~\eqref{eq:terminal-cost-error} is equivalent to the 1-norm of the error between \(s_{j,T}\) and \(s_{j,T,des}\).

Ensuring a bus only charges at most once during a visit to the station involves tracking whether a bus has already charged in a visit or not.
By tracking these variables separate from the receding horizon iterations, each receding horizon optimization problem can be constrained to disallow further charging by any bus that has already charged in a station visit.

\section{Results}\label{sec:results}
An empirical analysis of the characteristics of this method is presented to demonstrate the capabilities of this approach.
The general scheduling scenario is presented first, followed by an outline of the noise sources and characteristics.
Finally, Monte-Carlo experiments are performed to characterize the behavior and performance of the receding horizon planning.

\subsection{Bus Charging Scenario}\label{ssec:results_scenario}
Three bus charging scenarios are used to analyze the methodology, each based in the deployment scenarios of the Utah Transit Authority (UTA).
In the near future UTA plans to have 18 BEBs running in the Salt Lake City (SLC), Utah area and 10 BEBs in the Ogden, Utah area running the 2, 209, 220, 509, 602, and 603X UTA routes.
As an example, the 18 SLC schedules are visualized in~\cref{fig:slc_routes}.
\begin{figure}
    \centering
    \includegraphics[width=\linewidth]{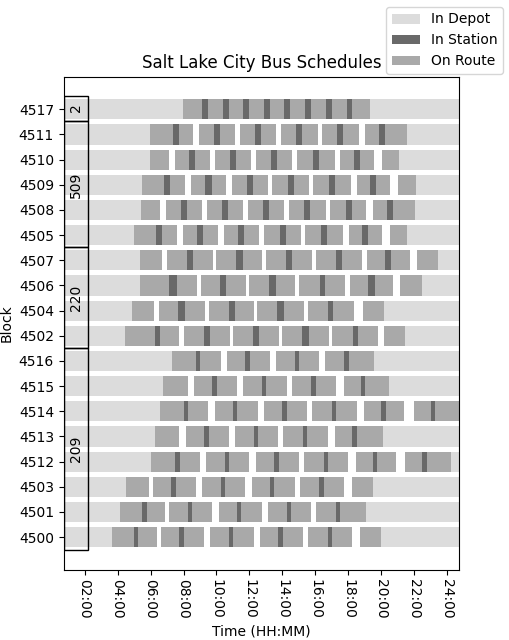}
    \caption{\label{fig:slc_routes}
        The schedules for buses in the Salt Lake City, Utah area that serve routes 2, 209, 220, and 509.
    }
\end{figure}
The experiments in this work utilize the schedules from these two locations as the basis for two of the situations to run in the Monte-Carlo simulations for each location.

The third scenario is one that is randomly generated with 30 buses using a generation method resulting in similar routes to those in the UTA schedules.
This method randomly selects a route length, an amount of time in the station, and a nominal on-route power draw for each bus.
Then, a schedule is formed by alternating on-route and in-station times through the day until the evening (23:00 in this work) when the bus returns to the depot.
The route lengths are selected uniformly from 45 to 150 minutes, the time in station from 20 to 45 minutes, and a nominal power draw over the route from 28 to 36 kW.

The cost structure used in the following experiments is derived from the Rocky Mountain Power (RMP) Schedule 8~\cite{RMPSch8} rate schedule for the winter months.
This rate schedule includes both consumption and demand costs, with a TOU component to each.
The \(c_{c,k}\) is set to \$0.051577 for the time \(k\) in the peak times of 06:00--09:00 and 18:00--22:00 and set to \$0.026216 in the off-peak times.
The set \(\mathcal{K}_{high\textit{-}demand}\) is defined to match these same peak times.
The off-peak demand rate, \(c_b\), is set to \$4.81 with the on-peak rate, \(c_{TOU}\), set to \$13.92
These rates are used for both the reference day plan and the receding horizon planner.

\subsection{Planning parameters}
As described in \cref{sec:receding-horizon}, a static long-term plan is generated by solving the MILP in~\eqref{eq:base-milp}.
For this work a 24-hour period is used in this static plan as it is naturally repeatable.
A discretization step size of 5 minutes is used as it was found to adequately balance computational complexity and solution resolution.
To mitigate the possibility that the upper and lower charge level constraints are violated when noise is introduced in the Monte-Carlo simulations, the static day plan uses a 5\% SOC lower/higher buffer on the upper/lower bound constraints.
To encourage staying near the middle of the battery capacity range, which is better for battery health~\cite{Woody2020}, the initial SOC for all buses is set to 70\%.
The buses are required to end the day at the same 70\% SOC to ensure that the generated schedule is repeatable.
This problem is solved using the off-the-shelf Gurobi solver~\cite{GurobiOptimization2021} limited in solve time to 6 hours.

Similarly, the receding horizon method needs the discretization step size, horizon length and the optimization solve time limit to be determined.
A horizon length of 1 hour with discretization of 3 minutes are used in this work.
These parameters were found to allow sufficient predictive capabilities, while maintaining a relatively short optimization time limit of 10 seconds.
This short time limit is important for performing many Monte Carlo runs within a reasonable time.
In practice, a larger horizon and/or finer discretization could be used.


\subsection{Benefits of variable-rate charging}
\begin{table}
    \centering
    \caption{Variable-rate vs Fixed-rate Cost and Computation}\label{tab:var-rate-comp}
    \begin{tabular}{@{}lllll@{}}
        \toprule
               & \multicolumn{2}{c}{Day Cost} & \multicolumn{2}{c}{\begin{tabular}{@{}c@{}}
                Optimality Gap at 600 s \\ (Solve Time to Optimality)\end{tabular}}                              \\ \cmidrule(r){2-3}\cmidrule(l){4-5}
               & Variable-rate                & Fixed-rate                                    & Variable-rate & Fixed-rate \\
        \midrule
        SLC    & \$648.57                     & \$666.10                                      & (5.76 s)      & 9.92\%     \\
        Ogden  & \$255.46                     & \$292.52                                      & (9.92 s)      & 32.62\%    \\
        Random & \$329.34                     & \$339.52                                      & 3.16\%        & 10.85\%
    \end{tabular}
\end{table}
To validate the benefits of variable-rate charging, the Random, Ogden, and SLC scenarios were compared in both costs and solve complexity under both a fixed-rate and variable-rate formulation.
A linear charging profile was used in both cases, with the variable-rate formulation achieving this by using \(\eta_j\) equal to or greater than the maximum SOC\@.
\Cref{tab:var-rate-comp} show the resulting single-day costs, and the optimality gap (or solve time if solved to optimality) at the end of a 600 second (10 minute) solve time limit used for this comparison.
As seen in \cref{tab:var-rate-comp}, the variable-rate charging results in both lower costs and better computation characteristics compared to the same formulation using fixed-rate charging.

\subsection{Noise types and characteristics}\label{ssec:noise}
To demonstrate the capability of the proposed receding horizon method to cope with noise, a Monte Carlo analysis is performed.
For this, truth models that include noise sources in bus discharging, bus charging, and the station arrival times of each bus are developed.
These truth models and the associated noise characteristics are considered to be not available to the planning methods considered herein.

For the discharging dynamics, a nominal discharge rate is given a random constant bias combined with white noise.
This can be modeled as a discrete linear system for bus \(j\)
\begin{equation}
    \tilde{s}_{j,k+1} = \tilde{s}_{j,k} - d_{j,k} + \beta_{j}^{d} + \nu_{j,k}^d,
\end{equation}
where \(k\) denotes discrete time, \(\nu_{j,k}^d\sim{}\mathscr{N}(0,\sigma_{\nu}^{d})\) is the random variable for the added white noise, and \(\beta_{j}^{d}\sim\mathscr{N}(0,\sigma_{\beta}^{d})\) is the random constant bias.

Similarly, the charging dynamics are given a random constant bias and white noise to result in the following system for bus \(j\)
\begin{equation}
    \tilde{s}_{j,k+1} = \tilde{s}_{j,k} + g(j,k) + \sum_{l\in\mathcal{L}}\left(\beta_{l}^{c} + \nu_{l,k}^{c}\right)x_{\sigma(j,k,l)}
\end{equation}
where \(\nu_{l,k}^{c}\sim{}\mathscr{N}(0,\sigma_{l,\nu}^{c})\) is the added white noise and \(\beta_{l}^{c}\sim\mathscr{N}(0,\sigma_{l,\beta}^{c})\) the random constant bias, varying depending on the charger type \(l\).
The summation over the charger types uses \(x_{\sigma(j,k,l)}\) to select the bias and noise that apply to the charger type that the bus uses in the time step \(k\) to \(k+1\).

The arrival time variations are achieved by perturbing a nominal arrival time, \(t_a\) by white noise such that the new arrival time, \(t_a'\), for a Monte Carlo run is
\begin{equation}
    t_a' = t_a + \nu^{a}
\end{equation}
with \(\nu^{a}\sim\mathscr{N}(0,\sigma^{a})\) being the random variable for the added white noise.
It should be noted that the choice of a normal distribution is used primarily due to its ubiquity, and it is left to future work to explore additional distributions.
\begin{table}
    \centering
    \caption{Noise Parameters}\label{tab:noise-params}
    \begin{tabular}{@{}lllllll@{}}
        \toprule
        \(\sigma_{\nu}^d\) \(\left(\frac{\text{kWh}}{\sqrt{s}}\right)\)
             & \(\sigma_{j, \beta}^d\) (kW)
             & \multicolumn{2}{c}{\(\sigma_{l,\nu}^c\) \(\left(\frac{\text{kWh}}{\sqrt{s}}\right)\)}
             & \multicolumn{2}{c}{\(\sigma_{l,\beta}^c\) (kW)}
             & \(\sigma^a\) (sec)
        \\
             &
             & slow
             & fast
             & slow
             & fast
             &
        \\ \midrule
        0.05 & 1.2                                                                                   & 0.04167 & 0.0833 & 1.2 & 2.4 & 120
    \end{tabular}
\end{table}

The noise parameter values used in this work are given in~\cref{tab:noise-params}.
The charging and discharging values are based in data gathered over the months of June-September on the UTA SLC and Ogden routes previously mentioned.
The arrival time variations are based in data from~\cite{Comi2017}.

\subsection{Monte Carlo experiments}
The noise sources on the discharging, charging, and arrival times are used in a simulation environment to generate Monte Carlo runs.
The Monte Carlo runs are used to analyze the performance of the receding horizon planning method.

To generate one simulation run, each bus is initialized at its nominal initial charge level, without any noise.
The simulation then follows the schedules of the buses with the addition of noise on the arrival time, charge rates, and route discharge.

In each run, the decisions of when to charge, with what charger type, and for how long are made using one of three strategies: the \textit{Qin} strategy is a thresholding strategy based on the work in~\cite{Qin2016} which also approximates a typical bus operator charging strategy; the \textit{Open-Loop} strategy follows the top-level full-day plan as closely as possible\footnote{The Open-Loop strategy honors charge rates and charge stopping times. However, it cannot always start on time due to the bus arriving late.}; and the \textit{Hierarchical} strategy as presented in this work.
For the Qin strategy, the decision to charge or not is made upon arrival at the station, depending on if the SOC is below the threshold (70\% in this work).
A bus stops charging when it reaches the maximum charge level or must leave to continue on its route, whichever comes first.
The Open-Loop method will only charge if the nominal plan indicates to do so, regardless of any stochasticity.
This means the Open-Loop strategy cannot charge more time than the nominal plan indicates, but may charge for less time when buses arrive late.
The Hierarchical strategy, on the other hand, accounts for discrepancies from the nominal plan that it encounters.

A comparison of these charging strategies is made over 50 Monte Carlo runs for each combination of strategy (Qin, Open-Loop, and Hierarchical) and scenario (SLC, Ogden, Random), resulting in a total of 450 runs.

The primary points of comparison across the strategies are the performance (the resulting electric utility costs), and the feasibility and repeatability for each strategy.
Analyzing the feasibility and repeatability balances the desire to reduce costs with the possibility of buses having service interruptions.

As seen in~\cref{tab:mc-costs}, the resulting costs of both optimization-based methods (Open-Loop and Hierarchical) are significantly better than the Qin method, which is similar to charging strategies often used in practice.
\begin{table}
    \centering
    \caption{Strategy and Scenario Daily Cost}\label{tab:mc-costs}
    \begin{tabular}{@{}llll@{}}
        \toprule
               & Qin        & Open-Loop & Hierarchical \\
        \midrule
        SLC    & \$698.10   & \$315.14  & \$384.49     \\   
        Ogden  & \$261.26   & \$116.54  & \$134.87     \\   
        Random & \$1,369.26 & \$552.10  & \$650.96     
    \end{tabular}
\end{table}
While the Open-Loop strategy does provide the best costs, this is primarily due to its lack of feedback and the ``lost'' charging from the instances when a bus arrived late to the station and so could not charge for part of the time for which it was scheduled to charge.

\begin{figure}
    \centering
    \includegraphics[width=\linewidth]{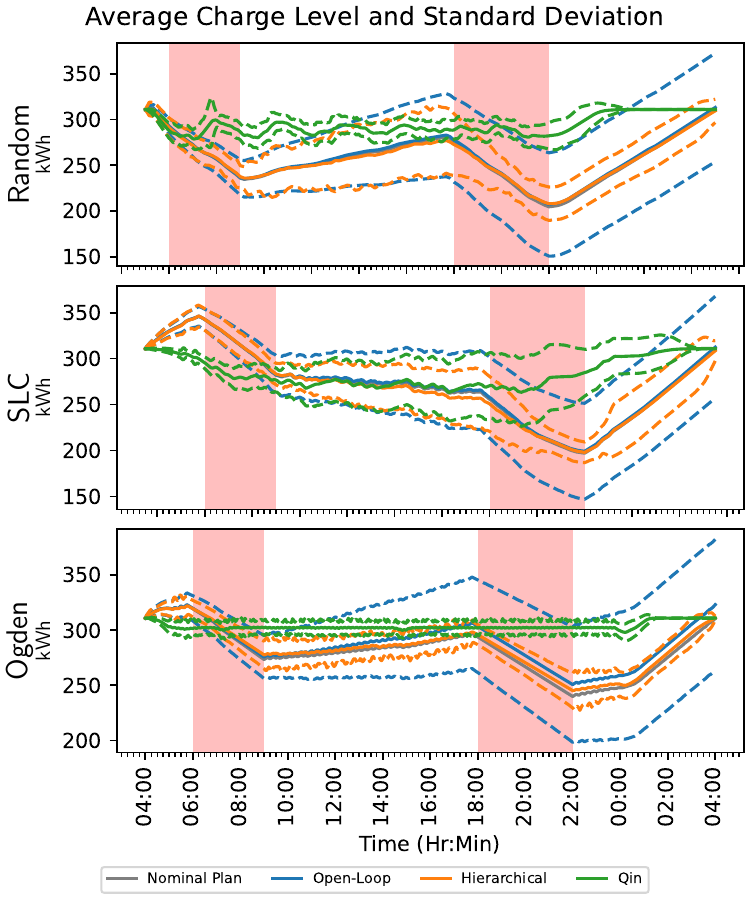}
    \caption{\label{fig:mc-avg-and-std-dev}
        The solid lines are the average charge levels (kWh) across Monte-Carlo runs and buses for each scenario.
        The dotted lines denote the three-\(\sigma \) variation across Monte-Carlo runs and buses (with each bus's charge level zeroed to its mean across Monte-Carlo runs)
        Note that the Qin strategy deviates from the nominal plan significantly during the day (as expected).
        Significantly, the three-\(\sigma \) variations of the Open-Loop strategy continue to grow over time, while the variations of the other two strategies do not.
    }
\end{figure}
Examining \cref{fig:mc-avg-and-std-dev} shows the average charge level over time of each combination of scenario and strategy.
These averages are taken across both the Monte-Carlo runs and the buses simultaneously.
Also shown are the three-\(\sigma \) variations calculated across Monte-Carlo runs and buses zeroed to their individual mean across those Monte-Carlo runs (this zeroing of the mean for each bus removes the effects of inherent differences between routes from the standard deviation, \(\sigma \), calculation).
It is first noted that the Open-Loop and Hierarchical methods both average nearly the same as the average of the nominal plan.
This aligns with expectations as all noise sources are zero-mean.
The Qin strategy, however, deviates significantly from the nominal plan as this strategy makes no reference to the nominal in its execution.

Significantly, \cref{fig:mc-avg-and-std-dev} demonstrates that the three-\(\sigma \) variations of the Open-Loop strategy grow over time in each scenario.
Due to this, the Open-Loop strategy results in much larger terminal variations than the Qin and Hierarchical strategies, especially at the end of the day.
These larger bound values demonstrate a greater sensitivity to the noise for the Open-Loop strategy, which is more likely to result in violations of constraints during operation.
Demonstrating this higher probability of constraint violation, the Open-Loop strategy violated the minimum charge level constraint in 45 of its 150 runs (30\%).
On the other hand, the Hierarchical strategy only dipped below the minimum charge level constraint for one run (\(< 1 \% \)).
Furthermore, that run was able to satisfy all running and terminal constraints with the minimum charge level constraint allowed only an additional 1\%.

The continual spreading of the variations for the Open-Loop strategy is due to the lack of any feedback mechanism to correct the charge levels of each bus back to the nominal values as they drift from nominal over time due to noise.
This underlying drifting of the bus charge levels warns of the possibility of future infeasibility being induced if the Open-Loop strategy is used over multiple days in a row.
To investigate this possibility, a Monte-Carlo analysis was performed over seven days on the Open-Loop and Hierarchical strategies to determine the multi-day feasibility of each strategy.
This multi-day analysis was performed such that the beginning charge levels of each bus for a day were set to that bus's average ending charge from the previous day's Monte-Carlo runs.
A new nominal plan was generated for that day, and then 50 Monte-Carlo runs were performed for each of the two strategies for that day.
On the seventh day of the Random scenario the Open-Loop strategy resulted in a situation where no nominal plan could be created (the problem was infeasible) due to the significant drift of the bus charge levels over time.
It should be noted that creating a new nominal plan did provide some corrective behaviors, but these were not enough to prevent the decay into infeasibility for the Open-Loop strategy.
In contrast, the Hierarchical strategy was capable of continuing to run on the seventh day with no indication that performance or feasibility was degrading.
The significant cost reduction compared to the Qin strategy combined with the much better feasibility characteristics compared to the Open-Loop strategy highlight the strengths of the proposed solution for scheduling the charging of BEBs within uncertain environments.

\section{Conclusion}\label{sec:conclusion}
Charging BEBs cost-effectively is a difficult problem due to the complex cost structure, the limiting battery capacity and charge rate factors, and the need to react to stochasticity in operations.
This work addresses these difficulties with a receding-horizon control strategy that responds to feedback from the environment.
Utilizing an optimization scheme that accounts for both consumption and demand TOU costs, and that references a non-linear, variable-rate partial charging model allows for costs to be optimized with a high degree of fidelity and flexibility.
This results in a significant cost reduction over a method capable of reacting to the environment (up to 52\%), and a more robust solution method compared to one that does not receive feedback.

\section*{Acknowledgements}
Special thanks go to the Utah Transit Authority and Rocky Mountain Power for providing some data used to produce the results of this work.

\appendices%

\section{Proof of~\texorpdfstring{\cref{lem:clti}}{Lemma 1}}\label[appendix]{pf:clti} 
The CC-CV charging scheme consists of charging at a constant current until reaching the switching voltage.
Then, the voltage is held constant while the current decays from the previous current level to zero.
In~\cite{Wu2020}, a model of the CC-CV charging profile was developed as a piecewise function of power draw with constant power \(\pcc \) during the CC phase and exponentially decaying power during the CV phase.

Following~\cite{Wu2020}, the time for a bus \(j\) to reach the switching point between the CC and CV phases from zero charge with charger type \(l\) is \(\tcc \).
This value is based on \(\pcc \), the bus battery's capacity, \(E_j\), and the desired switching point (as a percentage of charge level), \(\eta_j\), according to \(\tcc=\frac{\eta_j E_j}{\pcc} + t_0\).
Assuming, without loss of generality, that the starting time \(t_0 = 0\), from~\cite{Wu2020} the power draw over time, \(p_{j}(t)\), is
\begin{equation}\label{eq:power_draw_base}
	p_{j}(t) =
	\begin{cases}
		\pcc                       & 0 \le t < \tcc \\
		\pcc e^{-\expa (t - \tcc)} & \tcc \le t
	\end{cases}
\end{equation}
with the exponential rate, \(\expa > 0\), being a parameter of the underlying charging system and battery.

Going beyond~\cite{Wu2020} to form a linear dynamic system model begins with determining the bus battery charge level over time, \(s_{j}(t)\), by integrating the power, \(p_{j}(t)\),
\begin{align*}
	s_{j}(t)
	 & =
	\begin{cases}
		\int_{0}^{t}p_{j}(\tau) d\tau                                 & 0 \le t < \tcc \\
		\int_{0}^{\tcc}p_{j}(\tau) d\tau + \int_{\tcc}^{t}p_{j}(\tau) & \tcc \le t
	\end{cases}.
	\\
	\intertext{With the appropriate substitutions from~\eqref{eq:power_draw_base}, this becomes}
	 & =
	\begin{cases}
		\int_{0}^{t}\pcc d\tau                                                        & 0 \le t < \tcc \\
		\int_{0}^{\tcc}\pcc d\tau + \int_{\tcc}^{t}\pcc e^{-\expa (\tau - \tcc)}d\tau & \tcc \le t
	\end{cases}.
\end{align*}
Performing the integration yields a closed-form solution
\begin{align}\label{eq:base-charge-level}
	s_{j}(t) & =
	\begin{cases}
		\pcc t                                                             & 0 \le t < \tcc \\
		\pcc \tcc - \frac{\pcc}{\expa}\left(e^{-\expa (t-\tcc)} - 1\right) & \tcc \le t
	\end{cases}.
\end{align}
\Cref{eq:base-charge-level} can be written as a function of \(p_{j}(t)\).
Focusing on the second case,
\begin{equation}\label{eq:cccv-charge-level-in-terms-of-power}
	s_j (t) = \pcc \tcc - \frac{1}{\expa}\left(\underbrace{\pcc e^{-\expa (t-\tcc)}}_{p_j(t)} - \pcc\right) \quad \tcc \le t.
\end{equation}
Solving~\eqref{eq:cccv-charge-level-in-terms-of-power} for \(p_{j}(t)\)  allows power to be represented as a function of the charge level.
Note that \(\dot{s}_{j}(t)=p_{j}(t)\).
Combining the solution of \(p_{j}(t)\) in terms of \(s_{j}(t)\) with~\eqref{eq:power_draw_base}, a linear dynamic system representation of \(s_{j}(t)\) can be written as
\begin{equation*}
	\dot{s}_{j}(t)
	=
	\begin{cases}
		\pcc                                     & 0 \le t < \tcc \\
		-\expa s_{j}(t) + \expa \pcc \tcc + \pcc & \tcc \le t
	\end{cases}.
\end{equation*}
As \(s_{j}(t)\) is the only time-dependent element, the switching conditions can be written in terms of \(s_{j}(t)\), recalling that \(s_{j}(0) = 0\) and \(s_{j}(\tcc) = \eta_j E_j\),
\begin{align}\label{eq:cccv-clti-pre}
	\dot{s}_{j}(t) & =
	\begin{cases}
		\pcc                                     & 0 \le s_{j}(t) < \eta_j E_j \\
		-\expa s_{j}(t) + \expa \pcc \tcc + \pcc & \eta_j E_j \le s_{j}(t)
	\end{cases}.
\end{align}
The substitutions of~\eqref{eq:clti_params} in~\eqref{eq:cccv-clti-pre} gives the dynamics in~\eqref{eq:clti_charging}.

\section{Proof of~\texorpdfstring{\cref{lem:dlti}}{Lemma 2}}\label[appendix]{pf:dlti} 
A continuous linear time-invariant system, \(\dot{x}(t) = A x(t) + B u(t) \), can be exactly modeled in discrete time (if the \(u(t)\) is constant over a discrete step) as
\begin{equation*}
	x_{k+1} = \bar{A}x_{k} + \bar{B}u_k
\end{equation*}
with
\begin{align*}
	\bar{A} & = e^{A\delta} & \bar{B} & = \int_0^\delta e^{A\tau} d\tau B                    \\
	        &               &         & = A^{-1}(\bar{A} - I)B \quad (A\text{ non-singular})
\end{align*}
as given in~\cite[4.2.1]{Chen1984}.
These relations hold even if \(u(t) = 1\) is constant: i.e., the case in~\eqref{eq:clti_charging}.
Applying these relations to~\eqref{eq:clti_charging} produces the discrete system in~\eqref{eq:dlti_charging}.

\section{Proof of~\texorpdfstring{\cref{lem:approx-gain-constraint}}{Lemma 3}}\label[appendix]{pf:approx-gain-constraint} 
The upper bound~\eqref{eq:ideal_gain_constraint} is composed of two linear functions as visualized in \cref{fig:gain_approximation}.
Equation~\eqref{eq:ideal_gain_constraint} can be approximated by the concave function formed by using the intersection point of the two lines as the switching point.
Concave piecewise-linear functions can be represented exactly by the minimum of all the linear functions (i.e., \(f(x) = \min(f_1(x), f_2(x), \dots, f_n(x))\)).
Accordingly, the concave approximation can be written as
\begin{equation*}
	g_{j,k,l} \le \min\left(\barBcc, (\barAcv - 1) s_{j,k} + \barBcv \right)
\end{equation*}
or, equivalently,
\begin{equation*}
	\begin{aligned}
		g_{j,k,l} & \le \barBcc                          \\
		g_{j,k,l} & \le (\barAcv - 1) s_{j,k} + \barBcv.
	\end{aligned}
\end{equation*}
The quality of this approximation depends on how close the intersection point, \(s_{j,k,l}^{\times}\), is to the true switching point, \(\eta_j E_j\).
\begin{figure}
	\centering
	\begin{tikzpicture}
		\begin{axis}[
				axis lines=left,
				y=0.4cm,
				xlabel=\(s_{j,k}\),
				ylabel=\(g_{j,k,l}\),
				xtick=\empty,
				ytick=\empty,
				legend style={font=\footnotesize, anchor=north east, at={(1,1.01)}},
				font=\footnotesize
			]
			\addplot[domain=0:100, samples=2, color=blue]{4.5};
			\addlegendentry{\(\barBcc\)}
			\addplot[domain=0:100, samples=2, color=red] {-0.1 * x + 10};
			\addlegendentry{\((\barAcv - 1) s_{j,k} + \barBcv\)}
			\addplot+[no markers, dashed] coordinates{(70, 0) (70, 9)};
			\addlegendentry{\(\eta_j E_j\)}
			\node [coordinate, pin=below:{\(s_{j,k}^{\times}\)}]
			at (axis cs:55,4.5) {};
			\draw [
				decorate,
				thick,
				decoration={calligraphic brace, raise=2pt, amplitude=3pt},
			]
			(axis cs:70,4.5) -- (axis cs:70,3)
			node[pos=0.5, right=5pt, black]{\(g_{j,k,l}^e\)};
			\draw [
				decorate,
				thick,
				decoration={calligraphic brace, raise=2pt, amplitude=5pt},
			]
			(axis cs:55,4.5) -- (axis cs:70,4.5)
			node[pos=0.5, above=7pt, black]{\(s_{j,k}^e\)};
		\end{axis}
	\end{tikzpicture}
	\caption{A visualization of the two parts of the approximate gain upper bound constraint (not to scale).
		The blue constant line corresponds to the linear portion of the CC/CV charging profile, while the red corresponds to the exponential portion.
		The intersection point of the two lines is the switching point for the approximation, while the vertical dashed line is the true switching point.
		Thus, the approximation introduces error in between the intersection point and the true switching point, with the most error occurring in the limit approaching the true switching point.
	}\label{fig:gain_approximation}
\end{figure}
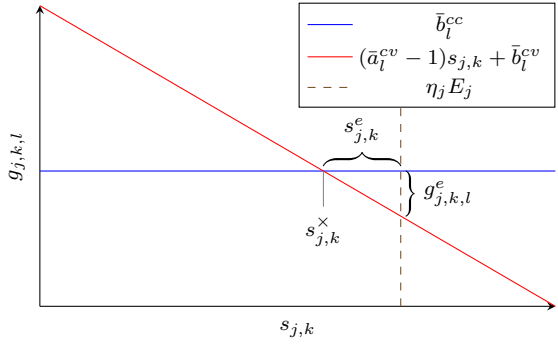

Note that \(\barBcc \) is a constant and \((\barAcv - 1) s_{j,k} + \barBcv \) is linear in \(s_{j,k}\) with \(-1 < \barAcv - 1 < 0\) (i.e., the slope is negative), as \cref{fig:gain_approximation} demonstrates.
Thus, when \(s_{j,k} < s_{j,k}^{\times}\), \(g_{j,k,l} \le \barBcc < (\barAcv - 1) s_{j,k} + \barBcv \) and when \(s_{j,k} > s_{j,k}^{\times}\), \(g_{j,k,l} \le (\barAcv - 1) s_{j,k} + \barBcv < \barBcc \).
Therefore, this approximation matches~\eqref{eq:gain_bound_base} exactly when \(s_{j,k}\) is not in the interval \((s_{j,k}^{\times},\eta_j E_j)\).

The intersection point \(s_{j,k}^{\times}\) occurs at the \(s_{j,k}\) that satisfies
\begin{equation*}
	\barBcc = (\barAcv - 1) s_{j,k} + \barBcv.
\end{equation*}
Solving for \(s_{j,k}\) yields \(s_{j,k}^{\times}\)
\begin{align}
	s_{j,k}^{\times} & = \frac{\barBcc - \barBcv}{\barAcv - 1}, \nonumber                                                                  \\
	\intertext{substituting from~\eqref{eq:dlti_params} gives}
	                 & = \frac{\Bcc\delta - \frac{(e^{\Acv\delta} - 1)\Bcv}{\Acv}}{e^{\Acv\delta} - 1} \nonumber                           \\
	\intertext{and substituting from~\eqref{eq:clti_params} yields}
	                 & = \frac{\pcc\delta - \frac{(e^{-\expa\delta} - 1)(\expa \pcc \tcc + \pcc)}{-\expa}}{e^{-\expa\delta} - 1} \nonumber \\
	                 & = \frac{\pcc\delta}{e^{-\expa\delta} - 1} + \frac{\expa \pcc \tcc + \pcc}{\expa} \nonumber                          \\
	                 & = \frac{\pcc\delta}{e^{-\expa\delta} - 1} +  \pcc \tcc + \frac{\pcc}{\expa}.
	\label{eq:gain_switching_point}
\end{align}
Calculating the error, \(s_{j,k}^e\), between~\eqref{eq:gain_switching_point} and the ideal switching point, \(\eta_j E_j\), is done as
\begin{align}
	s_{j,k}^e & =  \eta_j E_j - s_{j,k}^{\times}                                                           \nonumber        \\
	          & = \eta_j E_j - \frac{\pcc\delta}{e^{-\expa\delta} - 1} - \pcc \tcc - \frac{\pcc}{\expa}.
	\nonumber                                                                                                               \\
	\intertext{Noting that \(\pcc \tcc = \eta_j E_j\) allows cancelling terms}
	          & = -\frac{\pcc\delta}{e^{-\expa\delta} - 1} - \frac{\pcc}{\expa},                          \nonumber         \\
	\intertext{which can be written in terms of \(\barBcc \)}
	          & = -\barBcc\left(\frac{1}{e^{-\expa\delta} - 1}  + \frac{1}{\expa\delta}\right).\label{eq:cccv-charge-error}
\end{align}
Consider the function \(f(z) = \frac{1}{e^{-z} - 1} + \frac{1}{z}\), which is strictly decreasing\footnote{The derivative of \(f(z)\) can be shown to be strictly negative.
} and bounded in range to the open interval \((-1, 0)\)\footnote{The limits as \(z\) approaches positive and negative infinity are -1 and 0, respectively.}.
Taking \(z\) to be \(\expa\delta \) indicates that \(0 < - s_{j,k}^e < \barBcc \), or that \(s_{j,k}^{\times} < \eta_j E_j \), but by no more than \(\barBcc \).

The maximum gain error, \(g_{j,k,l}^e\), between the approximated gain and the ideal value of \(\barBcc \) is related to the error in the switch condition, \(s_{j,k}^e\) by
\begin{align*}
	g_{j,k,l}^e & = (\barAcv - 1) s_{j,k}^e                                                                            \\
	\intertext{as \(\barAcv - 1\) is the slope of the second term of the upper bound (the red line in \cref{fig:gain_approximation}).
		Substituting from~\eqref{eq:clti_params},~\eqref{eq:dlti_params} and~\eqref{eq:cccv-charge-error} gives}
	            & = -(e^{-\expa\delta} - 1)\left(\frac{1}{e^{-\expa\delta} - 1}  + \frac{1}{\expa\delta}\right)\barBcc \\
	            & = -\left(1 - \frac{1-e^{-\expa\delta}}{\expa\delta}\right)\barBcc
\end{align*}
As both \(\expa \) and \(\delta \) are positive, their product must also be positive.
The positivity of the product \(\expa \delta \) ensures that the coefficient of \(\barBcc \) is strictly negative with magnitude strictly less than 1, as \(0 < \frac{1-e^{-z}}{z} < 1\) for \(z > 0\).
Therefore, the maximum gain error, \(g_{j,k,l}^e\), is directly governed by the \(\frac{1-e^{-\expa\delta}}{\expa\delta}\) term.
Since \(\lim_{z\rightarrow 0} \frac{1-e^{-z}}{z} = 1\), an acceptable level of error can be chosen with the selection of \(\delta \) to be sufficiently small.
In other words, for a given desired error, \(\epsilon_d\), a value of \(\delta \) can be chosen sufficiently small according to
\begin{equation*}
	\left(1 - \frac{1 - e^{-\expa\delta}}{\expa\delta}\right) \barBcc \le \epsilon_d
\end{equation*}
ensuring that the actual error \(\epsilon \le \lvert g_{j,k,l}^e\rvert \le \epsilon_d\).
To complete the proof it is noted that when \(s_{j,k}^{\times} < s_{j,k} < \eta_j E_j\) (i.e., when error is introduced in \(g_{j,k,l}\)), the approximate upper bound to \(g_{j,k,l}\) results in a \textit{lower} value than the ideal upper bound.
In other words,~\eqref{eq:gain_bound_base} is a conservative approximation of~\eqref{eq:ideal_gain_constraint}.

\bibliographystyle{./bibliography/IEEEtran}
\bibliography{./bibliography/IEEEabrv,./bibliography/main.bib}


\end{document}